\journalname{Communications in Mathematical Physics}
\newtheorem{thm}    {Theorem}
\newtheorem{lem}     {Lemma}
\def\Tr{\mathop{\rm Tr}\nolimits}
\def\argmax{\mathop{\rm argmax}}
\def\Label#1{\label{#1}\ [\ #1\ ]\ }
\def\Label{\label}
\begin{document}

\title{{Universal coding for classical-quantum channel}}
\titlerunning{Universal coding for classical-quantum channel}

\author{Masahito Hayashi}
\institute{Graduate School of Information Sciences, Tohoku University, Sendai, 980-8579, Japan.\\ 
\email{hayashi@math.is.tohoku.ac.jp}}
\authorrunning{Masahito Hayashi}

\date{Received:}
\communicated{name}

\maketitle
\begin{abstract}
We construct a universal code for 
stationary and memoryless classical-quantum channel
as a quantum version of the universal coding by Csisz\'{a}r and K\"{o}rner.
Our code is constructed
by the combination of irreducible representation,
the decoder introduced through quantum information spectrum,
and the packing lemma.
\end{abstract}

\section{Introduction}\Label{s1}
The channel coding theorem for a stationary and 
memoryless\footnote{
Throughout the paper, a stationary memoryless channel
without using entangled input states is 
simply referred to as a stationary memoryless channel.}
(classical-)quantum channel has been established by combining the direct part shown by 
Holevo \cite{Holevo-QCTh} and 
Schumacher-Westmoreland 
\cite{Schumacher-Westmoreland} 
with the (weak) converse (impossible) part which goes back to 
1970's works by 
Holevo\cite{Holevo-bounds,Holevo-bounds2}. 
Its strong converse part has been shown by
Ogawa and Nagaoka\cite{Oga-Nag:channel} and Winter\cite{Winter}.
This theorem is a fundamental element of quantum information theory\cite{H-book}. 
After their achievement,
Ogawa and Nagaoka \cite{ON07} and Hayashi and Nagaoka\cite{Hay-Nag}
constructed other codes attaining the capacity. 
However, since the existing codes depend on the form of the channel, they are not robust against 
the disagreement between the sender's frame and receiver's frame.
In the classical system, 
Csisz\'{a}r and K\"{o}rner \cite{CK} constructed a universal channel coding, whose construction does not depend on the channel and depends only on the mutual information and the `type' of the input system, i.e., the empirical distribution of code words, whose precise explanation will be explained in Section \ref{s3}.
Such a universal code for the quantum case was also constructed for 
variable-length source coding\cite{Ly,Da} and fixed-length source coding\cite{CK}.

Concerning the quantum system, Jozsa et al. \cite{JH} constructed 
a universal fixed-length source coding, which depends only on the compression rate and attains the minimum compression rate.
Hayashi \cite{Expo-s} discussed the exponential decreasing rate of its decoding error.
Further, Hayashi and Matsumoto \cite{HayaMa} constructed a universal variable-length source coding in the quantum system.
However, any universal coding for classical-quantum channel was not constructed.
In fact,
the universal coding is required
when the receiver cannot synchronize his frame with the sender's frame.

In the present paper, we construct a universal coding for a classical-quantum channel, which attains the quantum mutual information and depends only on the coding rate and the `type' of the input system.
In the proposed construction, the following three methods play essential roles.
One is the decoder given by the proof of the information spectrum method.
In the information spectrum method, the decoder is constructed by the square root measurement of the projectors given by the quantum analogue of the likelihood ratio between the signal state and the mixture state\cite{Hay-Nag,Verdu-Han}.

The second method is the irreducible decomposition of the dual representation of the special unitary group and the permutation group.
The method of irreducible decomposition provides the universal protocols in quantum setting\cite{JH,Haya97,H2001,MaHa07,HayaMa,KeylW,Sanov}. 
However, even in the classical case, 
the universal channel coding requires the conditional type as well as the type\cite{CK}.
In the present paper, we introduce a quantum analogue of the conditional type, which is the most essential part of the present paper.

The third method is the packing lemma, which yields a suitable combination of the signal states independent of the form of the channel in the classical case\cite{CK}.
This method plays the same role in the present paper.

The remainder of the present paper is organized as follows.
In section \ref{s2}, we give the notation herein and the main result including the existence of a
universal coding for classical-quantum channel.
In this section,
we presented the exponential decreasing rate of the error probability of the presented universal code.
In section \ref{s3}, the notation for group representation theory is presented and
a quantum analogue of conditional type is introduced.
In section \ref{s4}, we give a code that well works universally.
In section \ref{s5}, 
the exponential decreasing rate mentioned in section \ref{s2} is proven by using
the property given in section \ref{s3}.

\section{Main Result}\Label{s2}
In the classical-quantum channel, we focus on the set of input alphabets 
${\cal X}:=\{1, \ldots, k\}$ and the representation space ${\cal H}$ of the output system, whose dimension is $d$.
Then, a classical-quantum channel is given as the map from ${\cal X}$ to the set of densities on ${\cal H}$ with the form $i \mapsto W(i)$.
The $n$-th discrete memoryless extension is given as
the map from ${\cal X}^n$ to the set of densities on the $n$-th tensor product system ${\cal H}^{\otimes n}$. That is, this extension maps the input sequence $\vec{i}=(i_1, \ldots, i_n)$ to the state $W_n(\vec{i}_n):=W(i_1)\otimes \cdots \otimes W(i_n)$.
Sending the message $\{1, \ldots, M_n\}$ requires an encoder and a decoder.
The encoder is given as a map $\varphi_n$ from the set of messages $\{1, \ldots, M_n\}$ to the set of alphabets ${\cal X}^n$, and the decoder is given by a POVM $Y^n=\{Y_i^n\}_{i=1}^{M_n}$.
Thus, the triplet $\Phi_n:=(M_n, \varphi_n, Y)$ is called a code.
Its performance is evaluated by the size $|\Phi_n|:=M_n$ and
the average error probability given by 
\begin{align*}
\varepsilon [\Phi_n,W]:=\frac{1}{M_n} \sum_{i=1}^{M_n} \Tr W_n (\varphi_n(i)) (I-Y_i^n).
\end{align*}
As mentioned in the following main theorem, there exists an asymptotically optimal code that depends only on the coding rate.

\begin{thm}\Label{thm1}
For any distribution $\vec{p}=\{p_i\}_{i=1}^k$ on the set of input alphabets ${\cal X}:=\{1, \ldots, k\}$
and any real number $R$, 
there is a sequence of codes $\{\Phi_n\}_{n=1}^{\infty}$ such that
\begin{align*}
\lim_{n \to \infty}\frac{-1}{n}\log \varepsilon [\Phi_n,W]
& \ge \max_{0 \le t \le 1} \frac{\phi_{W,\vec{p}}(t)-tR}{1+t} \\
\lim_{n \to \infty}\frac{1}{n}\log |\Phi_n| &= R
\end{align*}
for any classical-quantum channel $W$,
where $\phi_{W,\vec{p}}(t)$ is given by
\begin{align*}
\phi_{W,\vec{p}}(t):=- (1-t) \log \Tr (\sum_{i=1}^k p_i W(i)^{1-t})^{\frac{1}{1-t}} .
\end{align*}
Note that the code $\{\Phi_n\}_{n=1}^{\infty}$ does not depend on the channel $W$, and depends only on the 
distribution $\vec{p}$ and the coding rate $R$.
\end{thm}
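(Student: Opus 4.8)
The plan is to transplant the classical universal coding scheme of Csisz\'{a}r and K\"{o}rner \cite{CK} to the classical-quantum setting, replacing the classical conditional type by its representation-theoretic analogue and using a decoder that never refers to the channel $W$. First I would fix a single type class. Since the empirical distribution of an input sequence $\vec{i}\in{\cal X}^n$ takes at most $(n+1)^{k}$ values, I may choose a type $\vec{p}_n$ with $\vec{p}_n\to\vec{p}$ and draw all code words from the single type class associated with $\vec{p}_n$; the polynomially many types contribute only a subexponential factor and so leave the exponent untouched. Because all code words then share one empirical input distribution, the output states acquire a common structural symmetry, which is precisely what makes a channel-independent decoder possible.

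Second, I would build the universal decoder as the square-root (pretty good) measurement $Y_i^n=(\sum_j\Pi_j)^{-1/2}\Pi_i(\sum_j\Pi_j)^{-1/2}$ of a family of projectors $\Pi_i$, following the information-spectrum decoder of Hayashi and Nagaoka \cite{Hay-Nag}, but with the essential change that each $\Pi_i$ is built not from a channel-dependent likelihood ratio but from the irreducible decomposition of ${\cal H}^{\otimes n}$ under the joint action of $\SU(d)$ and the symmetric group. By Schur-Weyl duality ${\cal H}^{\otimes n}\cong\bigoplus_{\lambda}{\cal U}_{\lambda}\otimes{\cal V}_{\lambda}$, and I would let the universal typical projector retain only those components $\lambda$ whose normalized highest weight -- the quantum conditional type attached to the code word $\varphi_n(i)$ -- lies below a threshold fixed by the rate $R$. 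This projector is manifestly independent of $W$, which supplies the required universality.

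Third, I would apply the packing lemma together with the Hayashi-Nagaoka operator inequality $I-Y_i^n\le 2(I-\Pi_i)+4\sum_{j\ne i}\Pi_j$, which splits the averaged error into a first-kind term $\frac1{M_n}\sum_i\Tr W_n(\varphi_n(i))(I-\Pi_i)$ and a second-kind term $\frac1{M_n}\sum_i\sum_{j\ne i}\Tr W_n(\varphi_n(i))\Pi_j$. The packing lemma furnishes a code of size $M_n\doteq e^{nR}$ inside the type class for which the second-kind term is dominated, uniformly in $W$, by the total dimension of the retained irreducible components times $e^{-nR}$. The first-kind term is the weight that the true output $W_n(\varphi_n(i))$ assigns outside the retained components; using the asymptotics of the dimensions and characters of the $\SU(d)$-irreps \cite{KeylW} this is governed by a quantity that, after optimizing the threshold, reproduces the $\phi_{W,\vec{p}}(t)$-dependent rate. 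Balancing the two exponents and optimizing over $0\le t\le1$ then yields $\max_{0\le t\le1}\frac{\phi_{W,\vec{p}}(t)-tR}{1+t}$.

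The hard part will be the second step: defining the quantum conditional type and its projector so that it is at once channel-independent and sharp enough to recover the channel-dependent optimal exponent. Classically the conditional type is a purely combinatorial object, but quantumly the input-type structure and the output-state structure need not commute, so the representation-theoretic projector must be chosen to interpolate correctly between them. Matching the resulting representation-theoretic dimension and character bounds to the analytic function $\phi_{W,\vec{p}}(t)$ is where the principal technical effort lies.
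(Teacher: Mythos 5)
Your overall architecture --- a single type class, the Csisz\'{a}r--K\"{o}rner packing lemma, the Hayashi--Nagaoka operator inequality splitting the error into two terms, and Schur--Weyl duality to obtain channel independence --- is exactly the paper's. But the one thing you explicitly defer as ``the hard part'' is precisely the content of the paper, and your sketch of it would not work as stated. You propose a projector that ``retains those components $\lambda$ whose normalized highest weight lies below a threshold fixed by $R$''; a threshold on the Young diagram of the \emph{output} space alone cannot encode the code word, so it cannot separate $W_n(\varphi_n(i))$ from $W_n(\varphi_n(j))$. The paper's construction is different and more specific: for each code word $\vec{x}$ it builds a \emph{quantum conditional type} state $\rho_{\vec{x}}:=U_s(\rho_{U,m_1}\otimes\cdots\otimes\rho_{U,m_k})U_s^\dagger$, the permuted tensor product of the universal (uniform-over-Young-diagrams) states on the blocks where each input letter occurs, and takes $P(\vec{x}):=\{\rho_{\vec{x}}-C_n\rho_{U,n}\ge 0\}$ --- a likelihood-ratio test between two channel-independent densities. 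The decisive structural facts, which your proposal never supplies, are that $\rho_{\vec{x}}$ commutes with $\rho_{U,n}$ (so the test is a genuine joint spectral projection), that $\rho_{\vec{x}}\ge \mathrm{poly}(n)^{-1}W_n(\vec{x})$ and $\rho_{U,n}\ge \mathrm{poly}(n)^{-1}\sigma^{\otimes n}$ for every state $\sigma$, and that $P(\vec{x})$ is invariant under the stabilizer $S_{\vec{x}}\subset S_n$, which is what lets the packing-lemma count over conditional types convert the sum over wrong code words into $W_{\vec{p}}^{\otimes n}$.

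The quantitative step is also underspecified in a way that matters. Your appeal to ``asymptotics of dimensions and characters of the $\SU(d)$-irreps'' does not obviously produce $\phi_{W,\vec{p}}(t)$. The paper instead gets the exponent by elementary operator inequalities: $I-P(\vec{x})\le \rho_{\vec{x}}^{-t}C_n^t\rho_{U,n}^t$ (valid because the two densities commute), the domination $W_n(\vec{x})\rho_{\vec{x}}^{-t}\le \mathrm{poly}(n)\,W_n(\vec{x})^{1-t}$, and the identity $\max_\sigma \Tr X\sigma^t=(\Tr X^{1/(1-t)})^{1-t}$, which is exactly where the expression $\Tr(\sum_i p_iW(i)^{1-t})^{1/(1-t)}$ and hence $\phi_{W,\vec{p}}(t)$ comes from; the second term is bounded by $\mathrm{poly}(n)\,e^{\sqrt{n}}C_n^{-1}$, and balancing $C_n=e^{n(R+r(t))}$ gives $(\phi_{W,\vec{p}}(t)-tR)/(1+t)$. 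Without the explicit conditional-type state, its commutation with $\rho_{U,n}$, and this maximization lemma, your outline is a plausible program rather than a proof.
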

The derivative of $\phi_{W,\vec{p}}(t)$ is given as
\begin{align*}
\phi_{W,\vec{p}}'(0)&=I(p,W):=\sum_{i=1}^k p_i \Tr W(i)(\log W_i- \log W_{\vec{p}}) \\
W_{\vec{p}}&:=\sum_{i=1}^k p_i W(i).
\end{align*}
When the transmission rate $R$ is smaller than the mutual information $I(\vec{p},W)$,
\begin{align*}
\max_{0 \le t \le 1} \frac{\phi_{W,\vec{p}}(t)-tR}{1+t} >0
\end{align*}
because there exists a parameter $t \in (0,1)$ such that $\phi_{W,\vec{p}}(t) -tR >0$.
That is, the average error probability $\varepsilon [\Phi_n,W]$ goes to zero.

\section{Group representation theory}\Label{s3}
In this section, we focus on the dual representation on the $n$-fold tensor product space by the 
the special unitary group $SU(d)$ and the $n$-th symmetric group $S_n$\footnote{Christandl\cite{Christandl} contains
a good survey of representation theory for quantum information.}.
For this purpose, we focus on the Young diagram and the `type'. 
The former is a key concept in group representation theory and the latter is that in information theory\cite{CK}.
When the vector of integers $\vec{n}=(n_1,n_2, \ldots, n_d)$ satisfies the condition $n_1 \ge n_2 \ge \ldots \ge n_d\ge 0$ and $\sum_{i=1}^d n_i=n$,
the vector $\vec{n}$ is called the Young diagram (frame) with size $n$ and depth $d$, the set of which is denoted as $Y_n^d$.
When the vector of integers $\vec{n} $ satisfies the condition $n_i \ge 0$ and $\sum_{i=1}^d n_i=n$,
the vector $\vec{p}=\frac{\vec{n}}{n}$ is called the `type' with size $n$, the set of which is denoted as $T_n^d$.
Further, 
for $\vec{p} \in T_n^d$, the subset of ${\cal X}^n$ is defined as:
\begin{align*}
T_{\vec{p}}:= \{\vec{x} \in {\cal X}^n| \hbox{The empirical distribution of } \vec{x}
\hbox{ is equal to }\vec{p}\}.
\end{align*}
The numbers of these sets are evaluated as follows:
\begin{align}
|Y_n^d| &\le |T_n^d|\le (n+1)^{d-1}\Label{9}\\
(n+1)^{-d} e^{n H(\vec{p})} &\le |T_{\vec{p}}| ,\Label{7}
\end{align}
where $H(\vec{p}):= -\sum_{i=1}^d p_i \log p_i$\cite{CK}.
Using the Young diagram, the irreducible decomposition of the above representation can be characterized as follows:
\begin{align*}
{\cal H}^{\otimes n}=\bigoplus_{\vec{n}\in Y_n^d} {\cal U}_{\vec{n}} \otimes {\cal V}_{\vec{n}},
\end{align*}
where ${\cal U}_{\vec{n}}$ is the irreducible representation space of $SU(d)$ characterized by $\vec{n}$, and
${\cal V}_{\vec{n}}$ is the irreducible representation space of $n$-th symmetric group $S_n$ characterized by $\vec{n}$.
Here, the representation of the $n$-th symmetric group $S_n$ is denoted as $V: s \in S_n \mapsto V_s$.
For $\vec{n}\in Y_n^d$, the dimension of ${\cal U}_{\vec{n}}$ is evaluated by
\begin{align}
\dim {\cal U}_{\vec{n}} \le n^{\frac{d(d-1)}{2}} \Label{11}.
\end{align}
Then, denoting the projection to the subspace ${\cal U}_{\vec{n}} \otimes {\cal V}_{\vec{n}}$ as $I_{\vec{n}}$, we define the following.
\begin{align}
\rho_{\vec{n}}&:= \frac{1}{\dim {\cal U}_{\vec{n}} \otimes {\cal V}_{\vec{n}}} I_{\vec{n}}\\
\rho_{U,n}&:= \sum_{\vec{n}\in Y_n^d} \frac{1}{|Y_n^d|} \rho_{\vec{n}}.\Label{15}
\end{align}
Any state $\rho$ and any Young diagram $\vec{n}\in Y_n^d$ satisfy the following:
\begin{align*}
\dim {\cal U}_{\vec{n}} \rho_{\vec{n}} 
\ge I_{\vec{n}} \rho^{\otimes n} I_{\vec{n}}.
\end{align*}
Thus, (\ref{9}), (\ref{11}), and (\ref{15}) yield 
the inequality
\begin{align}
n^{\frac{d(d-1)}{2}} |Y_n^d| \rho_{U,n} \ge \rho^{\otimes n}.\Label{ineq-2}
\end{align}

Next, we focus on two systems ${\cal X}$ and ${\cal Y}=\{1,\ldots, l\}$.
When the distribution of ${\cal X}$ is given by a probability distribution $\vec{p}=(p_1, \ldots, p_d)$ on $\{1, \ldots, d\}$, 
and the conditional distribution on ${\cal Y}$ with the condition on ${\cal X}$ is given by $\vec{V}$,
we denote the joint distribution 
on ${\cal X} \times {\cal Y}$ by $\vec{p} \vec{V}$ and
the distribution on ${\cal Y}$ by $\vec{p} \cdot \vec{V}$.
When the empirical distribution of $\vec{x}\in {\cal X}^n$
is $(\frac{n_1}{n},\ldots, \frac{n_d}{n})$,
the sequence of types $\vec{V}=(\vec{v}_1, \ldots, \vec{v}_d)\in 
T_{n_1}^{l}\times \cdots \times T_{n_d}^{l}$ is called a conditional type for $\vec{x}$\cite{CK}.
We denote the set of conditional types for $\vec{x}$ by $V(\vec{x},{\cal Y})$.
For any conditional type $V$ for $\vec{x}$, we define the subset of ${\cal Y}^n$:
\begin{align*}
T_{\vec{V}}(\vec{x}) 
:= \left\{\vec{y} \in {\cal Y}^n\left|
\begin{array}{l}
\hbox{The empirical distribution of } \\
((x_1,y_1), \ldots, (x_n,y_n))
\hbox{ is equal to }
\vec{p} \vec{V}.
\end{array}
\right.\right\},
\end{align*}
where 
$\vec{p}$ is the empirical distribution of $\vec{x}$.

We define the state $\rho_{\vec{x}}$ for $\vec{x} \in {\cal X}^n$.
For this purpose, we consider a special element $\vec{x}'=(\underbrace{1,\ldots,1}_{m_1},\underbrace{2,\ldots,2}_{m_2},\ldots, 
\underbrace{k,\ldots, k}_{m_k})$.
The state $\rho_{\vec{x}'}$ is defined as
$\rho_{\vec{x}'}:= \rho_{U,m_1} \otimes \rho_{U,m_2} \otimes \cdots \otimes \rho_{U,m_k}$.
For a general element $\vec{x} \in {\cal X}^n$,
we choose a permutation $s\in S_n$ such that $\vec{x}=s\vec{x}'$.
Then, we define the state $\rho_{\vec{x}}$ is defined as
$\rho_{\vec{x}}:=U_s \rho_{\vec{x}'}U_s^\dagger$, where $U_s$ is the unitary representation of $S_n$.
This state plays a similar role as the conditional type in the classical case.
Using the inequality (\ref{ineq-2}),
we have
\begin{align}
n^{\frac{kd(d-1)}{2}} |Y_n^d|^k
\rho_{\vec{x}} \ge W_n(\vec{x}).
\Label{ineq-1}
\end{align}
For $\vec{n}_1\in Y_{m_1}^d, \vec{n}_2\in Y_{m_2}^d, \ldots, \vec{n}_k\in Y_{m_k}^d$,
the density
$\rho_{\vec{n}_1}\otimes \rho_{\vec{n}_2}\otimes \cdots \otimes
\rho_{\vec{n}_k}$
is commutative with the projector $I_{\vec{n}}$ for $\vec{n} \in Y_n^d$.
This fact implies that
the density $\rho_{\vec{x}}$ is commutative with the density $\rho_{U,n}$.
This property is essential for the construction of the proposed decoder.

\section{Construction of code}\Label{s4}
According to Csisz\'{a}r and K\"{o}rner\cite{CK},
the proposed code is constructed as follows.
\begin{lem}
For a positive number $\delta>0$,
a type $\vec{p} \in T_n^d$,
and a real positive number $R< H(\vec{p})$,
there exist
$M_n:=e^{n(R-\delta)}$
distinct elements 
${\cal M}_n:=\{\vec{x}_1,\ldots, \vec{x}_{M_n}\} \subset T_{\vec{p}}$ 
such that
their empirical distributions are $\vec{p}$ and
\begin{align*}
|T_{\vec{V}}(\vec{x}) \cap ({\cal M}_n\setminus \{\vec{x}\})|
\le |T_{\vec{V}}(\vec{x})| 
e^{-n(H(\vec{p})-R)} 
\end{align*}
for $\vec{x} \in {\cal M}_n \subset T_{\vec{p}}$ and $\vec{V} \in V(\vec{x},{\cal X})$.
\end{lem}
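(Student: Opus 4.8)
The plan is to establish this purely combinatorial statement by the random-coding-with-expurgation argument of Csisz\'ar and K\"orner, since it makes no reference to the channel. First I would note that the claim is nontrivial only for conditional types $\vec{V}$ that preserve the marginal, $\vec{p}\cdot\vec{V}=\vec{p}$: for any other $\vec{V}$ each $\vec{y}\in T_{\vec{V}}(\vec{x})$ has marginal type $\vec{p}\cdot\vec{V}\neq\vec{p}$ and hence cannot belong to ${\cal M}_n\subset T_{\vec{p}}$, so the left-hand side vanishes. For marginal-preserving $\vec{V}$ one has $T_{\vec{V}}(\vec{x})\subset T_{\vec{p}}$, and moreover $|T_{\vec{V}}(\vec{x})|$ depends only on $\vec{p}$ and $\vec{V}$, not on the particular $\vec{x}\in T_{\vec{p}}$; I will use this to treat the target threshold $\tau_{\vec{V}}:=|T_{\vec{V}}(\vec{x})|\,e^{-n(H(\vec{p})-R)}$ as a constant.

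Next I would pick a uniformly random set of $L:=2M_n$ distinct elements of $T_{\vec{p}}$, which is possible for large $n$ because $R-\delta<H(\vec{p})$ forces $L<|T_{\vec{p}}|$. Fix one chosen codeword $\vec{x}$ and a marginal-preserving $\vec{V}$, and let $N^{\vec{V}}(\vec{x})$ count the remaining chosen codewords lying in $T_{\vec{V}}(\vec{x})$. Since each other codeword is uniform on the remaining elements of $T_{\vec{p}}$,
\begin{align*}
\mathbb{E}\,N^{\vec{V}}(\vec{x}) \le L\,\frac{|T_{\vec{V}}(\vec{x})|}{|T_{\vec{p}}|-1} \le 4\,e^{n(R-\delta)}(n+1)^{d}e^{-nH(\vec{p})}\,|T_{\vec{V}}(\vec{x})|,
\end{align*}
using the lower bound $(\ref{7})$ on $|T_{\vec{p}}|$. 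Dividing by $\tau_{\vec{V}}$ gives $\mathbb{E}\,N^{\vec{V}}(\vec{x})/\tau_{\vec{V}}\le 4(n+1)^{d}e^{-n\delta}$, which is exactly where the slack $\delta>0$ is consumed.

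I would then apply Markov's inequality, $\Pr\!\bigl(N^{\vec{V}}(\vec{x})\ge\tau_{\vec{V}}\bigr)\le 4(n+1)^{d}e^{-n\delta}$, and a union bound over the polynomially many conditional types, $|V(\vec{x},{\cal X})|\le(n+1)^{d(d-1)}$ by $(\ref{9})$. For $n$ large this makes the probability that a chosen codeword violates some constraint strictly below $\tfrac12$, so the expected number of ``bad'' codewords is below $M_n$; fixing a realization achieving this and deleting the bad codewords leaves at least $L-M_n=M_n$ survivors. The point that makes the argument close is monotonicity under deletion: removing codewords only decreases every count $N^{\vec{V}}(\vec{x})$, so each surviving codeword still meets all its constraints relative to the expurgated code. (The diagonal conditional type, where $\tau_{\vec{V}}<1$ and the constraint reads $N^{\vec{V}}(\vec{x})=0$, is handled automatically by the distinctness of the codewords.)

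The main obstacle I anticipate is bookkeeping rather than conceptual depth: one must verify that the family of conditional types is genuinely of polynomial size so that the $e^{-n\delta}$ decay defeats it in the union bound, and make the phrase ``for $n$ large'' quantitative by pinning down the threshold on $n$ at which $4(n+1)^{d}e^{-n\delta}\cdot(n+1)^{d(d-1)}<\tfrac12$. Everything else -- the marginal-preserving reduction, the constancy of $|T_{\vec{V}}(\vec{x})|$ over $T_{\vec{p}}$, and the monotonicity of the counts -- is routine once these estimates are in place.
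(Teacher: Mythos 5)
Your proof is correct and takes essentially the same route as the paper, which simply invokes Lemma 5.1 of Csisz\'ar and K\"orner (the packing lemma, itself established by exactly this random-selection, Markov, union-bound, and expurgation argument) with $\hat{V}$ taken to be the identity map. Your version is a sound self-contained reconstruction of that argument: the reduction to marginal-preserving conditional types, the computation showing $\mathbb{E}\,N^{\vec{V}}(\vec{x})/\tau_{\vec{V}}\le 4(n+1)^{d}e^{-n\delta}$, the polynomial count of conditional types, and the monotonicity of the counts under deletion are all as in the cited proof.
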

This lemma can be shown by substituting the identical map into $\hat{V}$ in Lemma 5.1 in 
Csisz\'{a}r and K\"{o}rner\cite{CK}.
Since Csisz\'{a}r and K\"{o}rner proved Lemma 5.1 using the random coding method,
we can replace $\delta$ by $\frac{1}{\sqrt{n}}$.
That is, 
there exist
$M_n:= e^{n R- \sqrt{n}}$
distinct elements 
${\cal M}_n:=\{\vec{x}_1,\ldots, \vec{x}_{M_n}\} \subset T_{\vec{p}}$ 
such that
their empirical distributions are $\vec{p}$ and
\begin{align}
|T_{\vec{V}}(\vec{x}) \cap ({\cal M}_n\setminus \{\vec{x}\})|
\le |T_{\vec{V}}(\vec{x})| 
e^{-n(H(\vec{p})-R)} 
 \Label{20}
\end{align}
for $\vec{x} \in {\cal M}_n \subset T_{\vec{p}}$ and $\vec{V} \in V(\vec{x},{\cal X})$.
Now, we transform the property (\ref{20}) to a more useful form.

Using the encoder ${\cal M}_n$, we can define the distribution $P_{{\cal M}_n}$ as
\begin{align*}
p_{{\cal M}_n}(\vec{x})= 
\left\{
\begin{array}{cc}
\frac{1}{|{\cal M}_n|} & \vec{x} \in {\cal M}_n \\
0 & \vec{x} \notin {\cal M}_n.
\end{array}
\right.
\end{align*}
For any $\vec{x} \in {\cal X}^n$, we define the invariant subgroup 
$S_{\vec{x}}\subset S_n$:
\begin{align*}
S_{\vec{x}}: = \{s \in S_n | s(\vec{x})=\vec{x} \}.
\end{align*}
Since $\vec{x}' \in T_{\vec{p}}$ implies that
\begin{align*}
\vec{p}^{n}(\vec{x}')=e^{-nH(\vec{p})} ,
\end{align*}
any element $\vec{x}' \in T_{\vec{V}}(\vec{x}) \cap {\cal M}_n \subset T_{\vec{p}}$
satisfies
\begin{align}
&\sum_{s \in S_{\vec{x}}}
\frac{1}{|S_{\vec{x}}|}
p_{{\cal M}_n}\circ s (\vec{x}')
=
\frac{|T_{\vec{V}}(\vec{x}) \cap {\cal M}_n|}{|T_{\vec{V}}(\vec{x})|}
\cdot
\frac{1}{|{\cal M}_n|} 
=
\frac{|T_{\vec{V}}(\vec{x}) \cap ({\cal M}_n\setminus \{\vec{x}\})|}{|T_{\vec{V}}(\vec{x})||{\cal M}_n|} \nonumber \\
\le &
e^{-nH(\vec{p})}
e^{\sqrt{n} } 
=\vec{p}^{n}(\vec{x}') e^{\sqrt{n} } \Label{8}
\end{align}
when the conditional type $\vec{V}$ is not identical.
Relation (\ref{8}) holds for any $\vec{x}'(\neq \vec{x}) \in {\cal M}_n $
because there exists a conditional type $\vec{V}$ such that $\vec{x}' \in T_{\vec{V}}(\vec{x})$ and $\vec{V}$ is not identical.

Next, for any $\vec{x} \in {\cal X}^n$ and any real number $C_n$, 
we define the projection
\begin{align*}
P(\vec{x}):=
\{\rho_{\vec{x}}- C_n \rho_{U,n}\ge 0\},
\end{align*}
where 
$\{X\ge 0\}$ presents the projection $\sum_{i:x_i \ge 0}E_i$
for a Hermitian matrix $X$ with the diagonalization $X=\sum_ix_i E_i$.
Remember that the density $\rho_{\vec{x}}$ is commutative with the other density $\rho_{U,n}$.
Using the projection $P(\vec{x})$, we define the decoder:
\begin{align*}
Y_{\vec{x}'}:=
\sqrt{\sum_{\vec{x} \in {\cal M}_n}P(\vec{x})}^{-1}
P(\vec{x}')
\sqrt{\sum_{\vec{x} \in {\cal M}_n}P(\vec{x})}^{-1}.
\end{align*}
In the following, 
the above-constructed code $(e^{nR-\sqrt{n}}, {\cal M}_n,\{Y_{\vec{x}}\}_{\vec{x}\in {\cal M}_n})$ is denoted by $\Phi_{U,n}(\vec{p},R)$.

\section{Exponential evaluation}\Label{s5}
Hayashi and Nagaoka\cite{Hay-Nag} showed that
\begin{align*}
I- Y_{\vec{x}'} \le 2 (I- P(\vec{x}'))+ 4 \sum_{\vec{x} (\neq \vec{x}')\in {\cal M}_n}P(\vec{x}).
\end{align*}
Then, the average error probability of $\Phi_{U,n}(\vec{p},R)$ is evaluated by
\begin{align}
& \frac{1}{|{\cal M}_n|}
\sum_{\vec{x}' \in {\cal M}_n}
\Tr W_n(\vec{x}')(I- Y_{\vec{x}'}) \nonumber \\
\le & 
\frac{2}{|{\cal M}_n|}
\sum_{\vec{x}' \in {\cal M}_n} \Tr W_n(\vec{x}') (I- P(\vec{x}'))
+
\frac{4}{|{\cal M}_n|}
\sum_{\vec{x}' \in {\cal M}_n} 
\Tr W_n(\vec{x}') 
\sum_{\vec{x} (\neq \vec{x}')\in {\cal M}_n}P(\vec{x})\nonumber \\
=&
\frac{2}{|{\cal M}_n|}
\sum_{\vec{x} \in {\cal M}_n} \Tr W_n(\vec{x}) (I- P(\vec{x}))
\nonumber \\
&+
4 
\Tr 
\left[
\sum_{\vec{x} \in {\cal M}_n}P(\vec{x})
\left(\frac{1}{|{\cal M}_n|}
\sum_{\vec{x}' (\neq \vec{x}) \in {\cal M}_n} W_n(\vec{x}')\right) 
\right]\Label{37}.
\end{align}

Since the density $\rho_{\vec{x}}$ is commutative with the density $\rho_{U,n}$,
we have
\begin{align}
(I- P(\vec{x}))
=
\{\rho_{\vec{x}}- C_n \rho_{U,n}< 0\}
\le 
\rho_{\vec{x}}^{-t} C_n ^t \rho_{U,n} ^t \Label{ineq-4}
\end{align}
for $0 \le t \le 1$.
Since 
the density $\rho_{\vec{x}}$ is commutative with the density $W_n(\vec{x})$,
$W_n(\vec{x})\rho_{\vec{x}}^{-t} $ is a Hermite matrix and 
(\ref{ineq-1}) implies that
\begin{align}
W_n(\vec{x})\rho_{\vec{x}}^{-t} 
\le 
n^{\frac{ktd(d-1)}{2}} |Y_n^d|^{kt}
W_n(\vec{x})^{1-t}.\Label{ineq-5}
\end{align}
Using (\ref{ineq-4}) and (\ref{ineq-5}), we have
\begin{align}
& \Tr W_n(\vec{x}) (I- P(\vec{x}))
\le
\Tr 
W_n(\vec{x})\rho_{\vec{x}}^{-t} 
 \rho_{U,n} ^t C_n ^t \nonumber \\
\le &
n^{\frac{ktd(d-1)}{2}} |Y_n^d|^{kt} C_n ^t 
\Tr W_n(\vec{x})^{1-t}
\rho_{U,n} ^t .\Label{l11}
\end{align}
Since the quantity $\Tr W_n(\vec{x}) (I- P(\vec{x}))$ is invariant for the action of the permutation
and the relation (\ref{7}) implies that
\begin{align}
\vec{p}^{n}(\vec{x})= e^{-n H(\vec{p})} 
\ge \frac{(n+1)^{-d}}{|T_{\vec{p}}|} \Label{l12}
\end{align}
for $\vec{x} \in T_{\vec{p}}$,
we obtain
\begin{align}
& \Tr W_n(\vec{x}) (I- P(\vec{x}))
=
\frac{1}{|T_{\vec{p}}|}
\sum_{\vec{x}'\in T_{\vec{p}}}
\Tr W_n(\vec{x}') (I- P(\vec{x}')) \nonumber \\
\le &
(n+1)^d \sum_{\vec{x}'\in {\cal X}^n
}
\vec{p}^{n}(\vec{x}')
\Tr W_n(\vec{x}') (I- P(\vec{x}'))\Label{l5} \\
\le &
(n+1)^d n^{\frac{ktd(d-1)}{2}} |Y_n^d|^{kt} C_n ^t 
\Tr 
(\sum_{\vec{x}'\in {\cal X}^n}
\vec{p}^{n}(\vec{x}')
 W_n(\vec{x}')^{1-t})
\rho_{U,n} ^t \Label{l6} \\
\le &
(n+1)^{d+\frac{ktd(d-1)}{2}} |Y_n^d|^{kt} C_n ^t 
\max_{\sigma}
\Tr 
\left[\sum_{x\in {\cal X}} \vec{p}(x) W_n(x)^{1-t}\right]^{\otimes n}
\sigma^t \nonumber \\
\le &
(n+1)^{d+\frac{ktd(d-1)}{2}} |Y_n^d|^{kt} C_n ^t 
\left( \Tr 
\left( \left[\sum_{x\in {\cal X}} \vec{p}(x) W_n(x)^{1-t}\right]^{\otimes n}\right)
^{\frac{1}{1-t}}\right)^{1-t} \Label{l1}\\
= &
(n+1)^{d+\frac{ktd(d-1)}{2}} |Y_n^d|^{kt} C_n ^t 
\left(\Tr 
\left(\sum_{x\in {\cal X}} \vec{p}(x) W_n(x)^{1-t}\right)^{\frac{1}{1-t}}
\right)^{n(1-t)}  \nonumber \\
=& (n+1)^{d+\frac{ktd(d-1)}{2}} |Y_n^d|^{kt} C_n ^t 
e^{- n \phi_{W,\vec{p}}(t)},\Label{49}
\end{align}
where 
(\ref{l5}), (\ref{l6}), and (\ref{l1}) follow from 
(\ref{l12}), (\ref{l11}), and Lemma \ref{lem1} in Appendix, respectively.

Next, we evaluate the second term of (\ref{37}) using the invariant property of $S_{\vec{x}}$:
\begin{align}
& \Tr 
\left[
P(\vec{x})
\left(\frac{1}{|{\cal M}_n|}
\sum_{\vec{x}' (\neq \vec{x}) \in {\cal M}_n} W_n(\vec{x}')\right) 
\right]\nonumber \\
=&
\Tr 
\left[
P(\vec{x})
\sum_{\vec{x}' (\neq \vec{x}) \in {\cal M}_n} 
p_{{\cal M}_n}(\vec{x}')
W_n(\vec{x}')
\right] \nonumber\\
= &
\Tr 
\left[
P(\vec{x})
\sum_{s \in S_{\vec{x}}}
\frac{1}{|S_{\vec{x}}|}
\sum_{\vec{x}' (\neq \vec{x}) \in {\cal M}_n} 
p_{{\cal M}_n}(\vec{x}')
V_{s}
W_n(\vec{x}')V_{s}^{*}
\right] \nonumber\\
= &
\Tr 
\left[
P(\vec{x})
\sum_{\vec{x}' (\neq \vec{x}) \in {\cal M}_n} 
\sum_{s \in S_{\vec{x}}}
\frac{1}{|S_{\vec{x}}|}
p_{{\cal M}_n}\circ s^{-1}(\vec{x}')
W_n(\vec{x}')
\right] \nonumber\\
\le &
\Tr 
\left[
P(\vec{x})
\sum_{\vec{x}' (\neq \vec{x}) \in {\cal M}_n}
\vec{p}^{n}(\vec{x}') e^{\sqrt{n}} 
W_n(\vec{x}')
\right] \Label{l13} \\
= &
e^{\sqrt{n}} 
\Tr 
\left[
P(\vec{x})
W_{\vec{p}}^{\otimes n}
\right] \nonumber\\
\le &
e^{\sqrt{n}} 
\Tr 
\left[
P(\vec{x})
n^{\frac{d(d-1)}{2}} |Y^d_n|
\rho_{U,n}
\right] \Label{l14} \\
\le &
e^{\sqrt{n} } 
\Tr 
\left[
P(\vec{x})
n^{\frac{d(d-1)}{2}} |Y^d_n|
C_n^{-1} \rho_{\vec{x}}
\right] \Label{l15}  \\
\le &
e^{\sqrt{n} } 
\Tr 
\left[
n^{\frac{d(d-1)}{2}}|Y^d_n|
C_n^{-1} \rho_{\vec{x}}
\right] 
= 
e^{\sqrt{n} } 
n^{\frac{d(d-1)}{2}}|Y^d_n|
C_n^{-1},\Label{59}
\end{align}
where
(\ref{l13}), (\ref{l14}), and (\ref{l15})
follow from (\ref{8}), (\ref{ineq-2}), and 
the inequality
$P(\vec{x})(\rho_{U,n}- C_n^{-1} \rho_{\vec{x}}) \le 0$.

For any $t \in (0,1)$ and $R>0$, 
we choose 
$|{\cal M}_n|:=e^{nR-\sqrt{n}}$,
$C_n:=e^{n(R+r(t))}$, and
$r(t):=\frac{\phi_{W,\vec{p}}(t)-tR }{1+t}$.
Since $r(t)=\phi_{W,\vec{p}}(t)-t(R+r(t))$,
from (\ref{37}), (\ref{49}) and (\ref{59}), the exponential decreasing rate of the average error probability is evaluated as
\begin{align*}
\lim_{n\to \infty}
\frac{-1}{n}\log \varepsilon (\Phi_{U,n}(\vec{p},R),W)
\ge \min\{\phi_{W,\vec{p}}(t)-t(R+r(t)),r(t)\} =
\frac{\phi_{W,\vec{p}}(t)-tR }{1+t}.
\end{align*}
That is, when we choose $t_0:=\argmax_{t \in (0,1)}\frac{\phi_{W,\vec{p}}(t)-tR }{1+t}$,
$|{\cal M}_n|:=e^{nR-\sqrt{n}}$, and 
$C_n:=e^{n(R+r(t_0))}$,
we obtain
\begin{align*}
\lim_{n\to \infty}
\frac{-1}{n}\log \varepsilon (\Phi_{U,n}(\vec{p},R),W)
\ge \max_{t \in (0,1)}
\frac{\phi_{W,\vec{p}}(t)-tR }{1+t}
\end{align*}
for any channel $W$.
Therefore, we obtain Theorem \ref{thm1}.

\section{Discussion}\Label{s6}
We have constructed a universal code attaining the quantum mutual information
based on the combination of information spectrum method, group representation theory,
and the packing lemma.
The presented code well works because any tensor product state $\rho^{\otimes n}$
is close to the state $\rho_{U,n}$.
Indeed, 
Krattenthaler and Slater \cite{KS} demonstrated the existence of 
the state $\sigma_n$ such that
$\frac{1}{n}D(\rho^{\otimes n}\|\sigma_n)\to n$ 
for any state $\rho$ in the qubit system
as a quantum analogue of Clarke and Barron's result\cite{CB}.
Its $d$-dimensional extension is discussed in another paper\cite{prep}. 

Further, Hayashi \cite{Expo-c} derived an exponential decreasing rate of error probability in classical-quantum channel, which is $
\max_{t:0\le t\le 1}
-(\log \sum_i p_i \Tr [W(i)^{1-t} W_p^t]) -tR$.
Since 
\begin{align*}
& e^{-\frac{\phi_{W,\vec{p}}(t)-t(R+r(t))}{1+t}}
=
e^{-(\phi_{W,\vec{p}}(t)-t(R+r(t)))}
=
e^{t(R+r(t))}
\max_\sigma 
\Tr (\sum_i p_i W(i)^{1-t})\sigma^t \\
\ge &
e^{tR}
\Tr (\sum_i p_i W(i)^{1-t})(\sum_i p_i W(i))^t
=
e^{-
(-(\log \sum_i p_i \Tr [W(i)^{1-t} W_{\vec{p}}^t]) -tR)},
\end{align*}
we obtain 
\begin{align*}
\max_{t:0\le t\le 1}
-(\log \sum_i p_i \Tr [W(i)^{1-t} W_{\vec{p}}^t]) -tR
\ge 
\max_{t:0\le t\le 1}
\frac{\phi_{W,\vec{p}}(t)-tR}{1+t}.
\end{align*}
That is, the obtained exponential decreasing rate is smaller than that of Hayashi\cite{Expo-c}.
However, according to Csisz\'{a}r and K\"{o}rner \cite{CK},
the exponential decreasing rate of the universal coding is the same as the optimal exponential decreasing rate in the classical case when the rate is close to the capacity.
Hence, if a more sophisticated evaluation is applied,
a better exponential decreasing rate can be expected.
Such an evaluation is left as a future problem.

\section*{Acknowledgment}
This research
was partially supported by a Grant-in-Aid for Scientific Research on Priority Area `Deepening and Expansion of Statistical Mechanical Informatics (DEX-SMI)', No. 18079014 and
a MEXT Grant-in-Aid for Young Scientists (A) No. 20686026.

\appendix
\section{Maximization}
The following lemma is used for the derivation in Section \ref{s5}.
\begin{lem}\Label{lem1}
When $X$ is a positive semi-definite,
we have
\begin{align}
\max_{\sigma} \Tr X \sigma^t= (\Tr X^{\frac{1}{1-t}})^{1-t}\Label{l3}
\end{align}
for $0 \le t \le 1$,
where $\sigma$ is a density matrix.
\end{lem}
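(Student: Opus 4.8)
The plan is to recognize (\ref{l3}) as the equality case of the noncommutative (trace) Hölder inequality, supplemented by an explicit maximizer. I would treat $0<t<1$ first, and assume $X\neq 0$ since both sides vanish when $X=0$. For the upper bound I would introduce the conjugate exponents $p=\frac{1}{1-t}$ and $q=\frac{1}{t}$, which satisfy $\frac{1}{p}+\frac{1}{q}=1$, and apply the Hölder inequality for Schatten norms to the positive operators $X$ and $\sigma^t$:
\begin{align*}
\Tr X\sigma^t
\le \left(\Tr X^{\frac{1}{1-t}}\right)^{1-t}\left(\Tr (\sigma^t)^{\frac{1}{t}}\right)^{t}
=\left(\Tr X^{\frac{1}{1-t}}\right)^{1-t}(\Tr \sigma)^{t}
=\left(\Tr X^{\frac{1}{1-t}}\right)^{1-t},
\end{align*}
where the last step uses $\Tr\sigma=1$. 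This yields $\max_{\sigma}\Tr X\sigma^t\le (\Tr X^{\frac{1}{1-t}})^{1-t}$.

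For achievability I would exhibit the explicit density
\begin{align*}
\sigma_*:=\frac{X^{\frac{1}{1-t}}}{\Tr X^{\frac{1}{1-t}}},
\end{align*}
which commutes with $X$. Using the identity $1+\frac{t}{1-t}=\frac{1}{1-t}$, a direct computation of $\Tr X\sigma_*^t$ gives exactly $(\Tr X^{\frac{1}{1-t}})^{1-t}$, so the upper bound is attained and the claimed equality holds for $0<t<1$.

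Finally I would dispose of the endpoints. For $t=0$ the left side is $\max_\sigma\Tr X\sigma^0=\Tr X$ (attained by any $\sigma$ whose support contains that of $X$), matching the right side $(\Tr X)^1$. For $t=1$ the right side is the Schatten-$\infty$ norm, $\lim_{t\to 1}(\Tr X^{\frac{1}{1-t}})^{1-t}=\lambda_{\max}(X)$, which equals $\max_\sigma\Tr X\sigma$, attained by the spectral projection onto a top eigenvector of $X$. The \emph{main obstacle} is justifying the upper bound for \emph{arbitrary} densities $\sigma$: since $\sigma$ ranges over all states, $X$ and $\sigma^t$ generally do not commute, and the essential content is precisely the noncommutative Hölder inequality. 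I would either cite this standard fact about Schatten norms or reduce it to the commutative (eigenvalue) case; by contrast, the achievability direction is routine because the witness $\sigma_*$ is computed explicitly.
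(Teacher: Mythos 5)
Your proof is correct but takes a genuinely different route from the paper's. The paper first restricts to states $\sigma=\sum_i q_i|i\rangle\langle i|$ diagonal in a fixed basis and computes $\max_{q}\Tr X\sum_i q_i^t|i\rangle\langle i|=\bigl(\sum_i\langle i|X|i\rangle^{\frac{1}{1-t}}\bigr)^{1-t}$ by a Lagrange multiplier calculation; it then optimizes over the choice of basis using the Jensen-type inequality $\langle i|A|i\rangle^{\frac{1}{1-t}}\le\langle i|A^{\frac{1}{1-t}}|i\rangle$ (valid because $x\mapsto x^{\frac{1}{1-t}}$ is convex), concluding that the eigenbasis of $X$ is optimal. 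You instead obtain the upper bound in one stroke from the trace (Schatten) H\"{o}lder inequality with conjugate exponents $p=\frac{1}{1-t}$, $q=\frac{1}{t}$, and verify attainment with the explicit witness $\sigma_*=X^{\frac{1}{1-t}}/\Tr X^{\frac{1}{1-t}}$ --- which is, implicitly, the same maximizer the paper's Lagrange computation produces. Your version is shorter and arguably tighter at the optimization step (the paper's Lagrange argument does not check second-order conditions or the boundary $q_i=0$), but it delegates the noncommutativity issue to the Schatten--H\"{o}lder inequality, which you correctly identify as the essential external input; the paper's argument is self-contained modulo the elementary Jensen step. Your separate treatment of the endpoints $t=0$ and $t=1$ (the latter read as a limit giving $\lambda_{\max}(X)$) is a small extra care the paper omits.
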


\begin{proof}
First, we prove 
\begin{align}
\max_{q_i\ge 0:\sum_i q_i=1}
\Tr X \sum_{i}q_i^t |i\rangle \langle i|=
\left(\sum_{i}
\langle i|X |i\rangle^{\frac{1}{1-t}}
\right)^{1-t}\Label{l4}
\end{align}
by the Lagrange multiplier method.
Let $\lambda$ be the Lagrange multiplier.
Then, 
\begin{align*}
0=
\sum_i (\langle i|X |i\rangle t q_i^{t-1}+ \lambda )\delta q_i
\end{align*}
Thus, 
\begin{align*}
0=\langle i|X |i\rangle t q_i^{t-1}+ \lambda.
\end{align*}
That is,
\begin{align*}
-\frac{t}{\lambda}\langle i|X |i\rangle  = q_i^{1-t} .
\end{align*}
Then, when the maximizing $q_i$ has the form
$C \langle i|X |i\rangle^{\frac{1}{1-t}}$ with the normalizing constant $C$,
the constant $C$ has the form $C=\frac{1}{\sum_j \langle j|X |j\rangle^{\frac{1}{1-t}}}$.
Substituting 
$\frac{\langle i|X |i\rangle^{\frac{1}{1-t}}}{\sum_j \langle j|X |j\rangle^{\frac{1}{1-t}}}$
into
$q_i$,
we obtain (\ref{l4}).

Since 
\begin{align*}
\left(\sum_{i}
\langle i|X |i\rangle^{\frac{1}{1-t}}
\right)^{1-t}
=
\Tr X \left(\sum_{i}
\langle i|\frac{1}{\Tr X}X |i\rangle^{\frac{1}{1-t}}
\right)^{1-t},
\end{align*}
the maximum
$\max_{\sigma} \Tr X \sigma^t$ is given when we choose the basis $\{|i\rangle \}$
maximizing $\sum_{i}
\langle i|\frac{1}{\Tr X}X |i\rangle^{\frac{1}{1-t}}$.
Since the function $x \mapsto x^{\frac{1}{1-t}}$ is a convex function,
$
\langle i|\frac{1}{\Tr X}X |i\rangle^{\frac{1}{1-t}}
\le
\langle i|(\frac{1}{\Tr X}X)^{\frac{1}{1-t}}
 |i\rangle$.
Therefore,
\begin{align*}
\left(\sum_{i}
\langle i|X |i\rangle^{\frac{1}{1-t}}
\right)^{1-t}
\le
(\Tr X^{\frac{1}{1-t}})^{1-t}.
\end{align*}
The equality holds when we choose the basis $\{|i\rangle\}$
as the eigenvectors of $X$.
Therefore, we obtain (\ref{l3}).
\end{proof}

\end{document}